\documentclass[a4paper,11pt]{article}

\usepackage{graphicx}

\usepackage{fullpage}
\usepackage{libertine}
\usepackage{color}
\usepackage{subcaption}
\usepackage{multirow}

\usepackage[ruled,vlined]{algorithm2e}
\SetArgSty{textrm}

\usepackage{amsmath,amsfonts,amssymb,amsthm}

\usepackage[breaklinks=true]{hyperref}
\usepackage[svgnames]{xcolor}
\usepackage[capitalise,nameinlink]{cleveref}
\hypersetup{colorlinks={true},linkcolor={DarkBlue},citecolor=[named]{DarkGreen}}

\usepackage{natbib}
\usepackage{authblk}

\usepackage{tikz}  
\usetikzlibrary{arrows}
\usetikzlibrary{patterns,snakes}
\usetikzlibrary{decorations.shapes}
\tikzstyle{overbrace text style}=[font=\tiny, above, pos=.5, yshift=5pt]
\tikzstyle{overbrace style}=[decorate,decoration={brace,raise=5pt,amplitude=3pt}]
\usetikzlibrary{shapes.geometric}
\usepackage{pgfplots}
\pgfplotsset{compat=1.16}
\newtheorem{theorem}{Theorem}[section]

\newtheorem{lemma}[theorem]{Lemma}

\theoremstyle{definition}
\newtheorem*{comment*}{Comment}

\newcommand{\cost}{\text{cost}}

\usepackage{mathtools}

\DeclareMathOperator*{\argmax}{arg\,max}

\renewcommand{\top}{\mathtt{top}}

\title{\bf Group-Fair Metric Distortion of Facility Assignment Problems}
\author{Alexandros A. Voudouris}

\affil{Department of Mathematics and Computer Science \\University of Southern Denmark, Denmark}

\date{}

\begin{document}

\allowdisplaybreaks

\maketitle

\begin{abstract}
We study the group-fair distortion of metric facility assignment problems, where a set of agents, partitioned into unknown groups, must be assigned to a collection of facilities, possibly subject to capacity or other feasibility constraints. Given an assignment, each agent incurs a cost that depends on both its distance to its assigned facility and, via an affinity factor, the average distance of the other members in its group to their assigned facilities. We consider full-information algorithms, which have complete knowledge of the metric space, and ordinal-information algorithms, which know the distances between facilities and only the rankings of the agents over facilities (sorted by increasing distance). We establish worst-case distortion upper bounds in terms of the Max-of-Sum and Sum-of-Max social objectives, which combine the classic utilitarian and egalitarian social cost measures. We also derive informational lower bounds for one-sided matching and clustering, two fundamental and well-studied problems captured by our model, that match our upper bounds exactly for Max-of-Sum and asymptotically for Sum-of-Max.

\medskip
\noindent 
{\bf Keywords:} Distortion; Group-fair objectives; Facility assignment
\end{abstract}

\section{Introduction} \label{sec:intro}

We consider facility assignment problems with a set $N$ of $n \geq 2$ {\em agents} and a set $A$ of $m\geq 2$ {\em facilities}. Agents and facilities are located in a metric space with distance function $d:(N \cup A) \times (N \cup A) \to \mathbb{R}_{\geq 0}$. Thus, all distances are nonnegative and satisfy the triangle inequality: $d(x,y) \leq d(x,z) + d(z,y)$ for any $x,y,z \in N \cup A$. The agents are also partitioned into a set $G$ of $k \geq 2$ unknown {\em groups}. We will use the tuple $I = (N,A,G)$ to denote an instance of the problem. Our goal is to compute an {\em assignment} $X = (x_i)_{i\in N}$ of agents to facilities (agent $i$ is assigned to facility $x_i$) to minimize a social objective that is a function of the individual costs of the agents for the assignment. 
We pay particular attention to two fundamental and well-studied assignment problems of this kind, each with its own constraints on the feasible assignments:
\begin{itemize}
    \item {\bf One-sided Matching:} There are $m=n$ facilities, and the assignment must be a one-to-one matching between agents and facilities~\citep{hylland1979efficient,house}. 
    
    \item {\bf Clustering:} $\lambda < m$ of the $m$ available facilities are chosen as {\em cluster centers} or {\em centroids}, and each agent must be assigned to its closest cluster center~\citep{SCHAEFFER200727,An2017}. 
\end{itemize}

\medskip
\noindent
{\bf Individual cost and social objectives.}
Given an assignment $X = (x_i)_{i \in N}$, each agent $i$ suffers an intragroup {\em individual cost} that depends to some extent on the distances of all members of $i$'s group from their assigned facilities. In particular, for an affinity parameter $\alpha \in [0,1]$, which specifies the extent to which the whole group affects the cost of each of its members and aims to capture how the happiness of one group member affects the others, we define the individual cost of an agent $i$ that belongs to a group $g \in G$ of size $n_g \geq 2$ as 
\begin{align*}
    \cost_i(X) = d(i,x_i) + \frac{\alpha}{n_g-1}\sum_{j \in g\setminus\{i\}} d(j,x_j). 
\end{align*}
In words, the individual cost of agent $i$ is its own distance from its assigned facility $x_i$, plus an $\alpha$-factor of the average distance of the other members in $g$ from their assigned facilities. 

The social cost of the overall assignment $X$ is a function of the individual agent costs. We consider the following two social objectives, which are well-defined for the group setting we study:
\begin{itemize}
    \item {\bf Max-of-Sum:} The maximum over all groups of the total individual cost of the agents therein: 
    \[ \text{Max-of-Sum}(X) = \max_{g \in G} \sum_{i \in g} \cost_i(X).\]
    \item {\bf Sum-of-Max:} The total over all groups of the maximum individual cost of the agents therein: 
    \[ \text{Sum-of-Max}(X) = \sum_{g \in G} \max_{i \in g} \cost_i(X).\]
\end{itemize}
Variants of these objectives (taking the average instead of the sum) were first proposed in the context of distributed metric voting by \citet{AnshelevichFV2022distributed}, but can be more generally thought of as aiming to capture different notions of fairness between groups. As such, they were recently considered by \citet{AmanatidisAJV2025group} to study the group-fair distortion of single-winner voting. Observe that, to optimize the Max-of-Sum objective, we need to compute an assignment that balances, over all groups, the total individual cost within each group. On the other hand, to optimize the Sum-of-Max objective, we need first to control the maximum individual cost within each group. When the social objective is clear from context, we will simply use $\cost(X)$ to denote the social cost of assignment $X$ according to the objective. 

\medskip
\noindent 
{\bf Types of information and distortion.}
Following the work of \citet{AnshelevichZ21}, we make the natural assumption that the distances between facilities in the underlying metric space are known, and we have some additional knowledge about the distances between agents and facilities. In particular, we consider two different types of information:
\begin{itemize}
    \item {\bf Full information:} We know the exact distances of the agents from the facilities, and we thus have complete knowledge of the metric space.  
    
    \item {\bf Ordinal information:} For each agent $i$, we are given a ranking $\succ_i$ of the facilities sorted from smallest to largest distance, that is $x \succ_i y$ implies that $d(i,x) \leq d(i,y)$. The ranking $\succ_i$ is also known as the {\em ordinal preference} of agent $i$ over the facilities.
\end{itemize}
We will refer to algorithms that have complete access to the metric space as {\em full-information algorithms}, and algorithms that have access to the ordinal preferences of the agents over the facilities as {\em ordinal-information algorithms}. For both classes, either because of the unknown group structure (when full information is available), or because of the unknown group structure and the limited information about the distances of the agents from the facilities, some loss of efficiency in terms of the aforementioned social objectives is expected. The {\em distortion} of an algorithm $\mathcal{A}$ for a facility assignment problem is the worst-case (over all possible instances with $|N|=n$ agents, $|A|=m$ facilities, and $|G|=k$ groups) ratio between the social cost of the assignment computed by $\mathcal{A}$ and the minimum possible social cost over all possible assignments:
\begin{align*}
    \sup_{I=(N,A,G)} \frac{\cost(\mathcal{A}(I))}{\min_X \cost(X)}.
\end{align*}
By definition, the distortion is always at least $1$, and our goal is to design algorithms that achieve as low a distortion as possible.

\subsection{Our Contribution} \label{sec:contribution}

We show upper bounds on the distortion of full-information and ordinal-information algorithms for general facility assignment problems with respect to Max-of-Sum and Sum-of-Max. In particular, for Max-of-Sum, using as a black box any $\rho$-approximation algorithm for the SUM problem of computing an assignment with minimum total distance over all agents, we derive an upper bound of $\rho k$ on the distortion of full-information algorithms and an upper bound of $2\rho k+1$ on the distortion of ordinal-information algorithms.
Similarly, for Sum-of-Max, using any $\rho$-approximation algorithm for the MAX problem of computing an assignment that minimizes the maximum distance over all agents, we show upper bounds of $(1+\alpha) \rho k$ and $(1+\alpha)(2\rho k+1)$ for full-information and ordinal-information algorithms, respectively. We complement these general results with informational lower bounds for one-sided matching and clustering that hold for all algorithms, even those with infinite computational resources. These bounds match the corresponding upper bounds exactly for Max-of-Sum and asymptotically in $k$ for Sum-of-Max.

\subsection{Related Work} \label{sec:related}

The distortion framework~\citep{distortion-survey} has become central for quantifying the inefficiency of mechanisms that operate under limited information in a plethora of utilitarian social choice settings, including {\em single-winner voting}~\citep{Anshelevich2018approximating,boutilier2015optimal,charikar24breaking,gkatzelis2020resolving,kempe2022veto}, {\em multi-winner voting}~\citep{caragiannis2017subset,caragiannis2022multi}, {\em matching}~\citep{amanatidis2022matching,anari2023matching,caragiannis2024augmentation}, and {\em clustering}~\citep{BurkhardtCFRSS24}. An assumption made implicitly in all the models considered in the aforementioned works is that the agents form a single pool, the mechanism has access to all their preferences at once, and the goal is to optimize a centralized objective function, such as the social or egalitarian cost in the metric setting, typically defined as the total or maximum individual cost of the agents, respectively. 

A recent stream of work has studied distributed settings, where the agents are partitioned into disjoint groups (corresponding, e.g., to electoral districts) that make independent proposals, which are then used to compute the final outcome~\citep{Filos-RatsikasM20}. In metric models, the efficiency of the outcome can be measured in terms of the social and egalitarian cost, but also in terms of tailor-made group-fair objectives that are functions of the groups, such as variants of Max-of-Sum and Sum-of-Max, which combine in some form the social and egalitarian costs within and across the groups~\citep{AnshelevichFV2022distributed,Voudouris2023tight,FilosRatsikasKVZ24,Xu2025distributed,Abam2025randomized-distributed}.

The work most closely related to ours is that of \citet{AmanatidisAJV2025group}, who considered a metric single-winner group voting model that lies between the centralized and distributed settings discussed above, where agents are partitioned into groups, but the outcome is centrally computed using the preferences of all agents. In this setting, the main source of inefficiency in terms of variants of group-fair objectives (like Max-of-Sum and Sum-of-Max) is the possibly unknown structure of the groups of agents, as well as the limited information used to describe their preferences. Similar group models have been studied in the context of truthful facility location, where the goal is to design strategyproof mechanisms that simultaneously achieve a good approximation in terms of group-fair objectives~\citep{ZhouLC2022group,LiLC2024group-obnoxious,LiPWZ2025group,WangZL24}.

As already mentioned, matching problems, including one-sided matching, have been studied within the distortion framework under different informational assumptions. Most notably, our work follows that of \citet{AnshelevichZ21}, who make the natural assumption that the facility locations in the metric space are known. However, in some applications, where the facilities do not correspond to physical buildings but to items or resources, we might not have any information about their distances. The distortion of metric one-sided matching when only ordinal information about the preferences of the agents over the items is given has recently been studied in a series of papers~\citep{caragiannis2024augmentation,anari2023matching,Filos-RatsikasG25,hastings2025fairmetricdistortionmatching} and is still a wide-open problem. 

Clustering problems such as $k$-median, $k$-center, and $k$-means, where the full metric space is given, have been thoroughly studied within the optimization literature, and approximation algorithms for these NP-hard problems with small constant bounds are known~\citep{CharikarGTS02,LiS16,ByrkaPRST17,AhmadianNSW20,Cohen-Addad0LSS25}. Ordinal versions, where agents rank the possible cluster centers by distance, can be thought of as variants of multiwinner voting with ordinal preferences~\citep{caragiannis2022multi}.  
More closely related to our work is the clustering literature on socially fair clustering, where the agents are partitioned into groups and the objective is to balance the cluster costs across groups by optimizing functions similar to Max-of-Sum~\citep{MakarychevV21,AbbasiBV21,GhadiriSV21,DickersonEMZ25}.

\section{Max-of-Sum} \label{sec:max-of-sum}
We start with the Max-of-Sum social objective. 
To upper-bound the distortion of algorithms in terms of this objective, we will need to calculate the total individual cost of agents within a group.

\begin{lemma} \label{lem:group-cost}
For any assignment $X$, the total individual cost of the agents in a group $g$ is 
\[\cost_g(X) = \sum_{i \in g} \cost_i(X) = (1+\alpha) \sum_{i \in g} d(i,x_i).\]
\end{lemma}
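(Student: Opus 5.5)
The plan is to expand the definition of $\cost_i(X)$ and sum it over $i \in g$, separating the ``own distance'' part from the ``affinity'' part. Summing the first term gives $\sum_{i\in g} d(i,x_i)$ directly. The only work is the second term: we want to show that
\[
\frac{\alpha}{n_g-1}\sum_{i\in g}\sum_{j\in g\setminus\{i\}} d(j,x_j) = \alpha \sum_{j\in g} d(j,x_j).
\]

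To do this, we exchange the order of summation. We rewrite the double sum as a sum over $j\in g$ of $d(j,x_j)$, multiplied by the number of indices $i\in g$ with $i\neq j$. That number is exactly $n_g-1$, since each $j$ is excluded only when $i=j$. The double sum therefore equals $(n_g-1)\sum_{j\in g} d(j,x_j)$. The prefactor $\frac{\alpha}{n_g-1}$ then cancels the $n_g-1$, which is legitimate because groups have size $n_g\ge 2$ by assumption.

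Adding the two parts yields $(1+\alpha)\sum_{i\in g} d(i,x_i)$, as claimed. There is no real obstacle. The one point that needs care is the counting step in the exchange of summation, and in particular relying on $n_g\ge 2$ so that the normalization is well defined.
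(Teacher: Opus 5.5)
Your proof is correct and is essentially the paper's argument: both expand the definition of the individual cost and sum it over the group. The only cosmetic difference is that you exchange the order of summation and count that each $d(j,x_j)$ appears $n_g-1$ times. The paper instead rewrites the inner sum as $\sum_{j\in g} d(j,x_j) - d(i,x_i)$ and then collects terms, which amounts to the same computation.
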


\begin{proof}
By definition, the individual cost of an agent $i$ that belongs to group $g$ is $\cost_i(X) = d(i,x_i) + \frac{\alpha}{n_g-1} \sum_{j \in g\setminus\{i\}} d(j,x_j)$. Consequently, the total cost of the agents in $g$ is
\begin{align*}
    \cost_g(X) 
    &= \sum_{i \in g} \cost_i(X)\\
    &= \sum_{i \in g} \bigg( d(i,x_i) + \frac{\alpha}{n_g-1} \sum_{j \in g\setminus\{i\}} d(j,x_j) \bigg) \\
    &= \sum_{i \in g} \bigg( \left(1-\frac{\alpha}{n_g-1}\right)d(i,x_i) + \frac{\alpha}{n_g-1} \sum_{j \in g} d(j,x_j) \bigg) \\
    &= \left(1-\frac{\alpha}{n_g-1}\right) \sum_{i \in g} d(i,x_i) + \frac{\alpha \cdot n_g}{n_g-1} \sum_{i \in g} d(i,x_i)  \\
    &= (1+\alpha) \sum_{i \in g} d(i,x_i),
\end{align*}
as desired.
\end{proof}

\subsection{Full-Information Algorithms} \label{sec:max-of-sum:full}
Since the entire metric space is known, we can employ known approximation algorithms that use all of this information. In particular, we consider algorithms that are $\rho$-approximate for the {\em SUM problem} of minimizing the total distance of the agents from their assigned facilities.\footnote{For one-sided matching, the SUM problem is equivalent to computing a minimum-weight matching, while for clustering, it is equivalent to $\lambda$-median.}
That is, such an algorithm outputs an assignment $W$ which, for any assignment $X$, satisfies the inequality
\begin{align*}
    \sum_{i \in N} d(i,w_i) \leq \rho \cdot \sum_{i \in N} d(i,x_i). 
\end{align*}
We show that such an assignment achieves a distortion of $\rho k$ for Max-of-Sum. 

\begin{theorem} \label{thm:max-of-sum:full:upper}
Any $\rho$-approximate algorithm for the SUM problem achieves a distortion of at most $\rho k$ in terms of the Max-of-Sum objective. 
\end{theorem}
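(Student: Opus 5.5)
The plan is to compare the Max-of-Sum cost of the $\rho$-approximate assignment $W$ with that of an optimal Max-of-Sum assignment $O$. We go through the total distance over all agents, which splits into the $k$ group sums, and use \cref{lem:group-cost} to turn group costs into group distance sums.

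First, by \cref{lem:group-cost}, for any assignment $X$ we have $\text{Max-of-Sum}(X) = (1+\alpha)\max_{g\in G}\sum_{i\in g} d(i,x_i)$. So the factor $(1+\alpha)$ appears on both sides of the ratio and cancels. It therefore suffices to show
\[\max_{g}\sum_{i\in g} d(i,w_i) \leq \rho k \max_g \sum_{i\in g} d(i,o_i).\]

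Second, we bound the left side by the total distance. All distances are nonnegative, so the maximum over groups of $W$'s group distance sums is at most $\sum_{i\in N} d(i,w_i)$. By the $\rho$-approximation guarantee applied with $X=O$, this is at most $\rho\sum_{i\in N} d(i,o_i)$.

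Third, we bound the total distance of $O$. Since $G$ partitions $N$ into $k$ groups, $\sum_{i\in N} d(i,o_i)=\sum_{g\in G}\sum_{i\in g} d(i,o_i)\leq k\max_g\sum_{i\in g} d(i,o_i)$. Chaining the three steps and multiplying back by $(1+\alpha)$ gives $\text{Max-of-Sum}(W)\leq \rho k\cdot\text{Max-of-Sum}(O)$.

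The argument is short, and no step is a real obstacle. The one point to check is that the approximation guarantee is used against the particular assignment $O$. That is valid because the guarantee holds for every feasible assignment $X$, and $O$ is feasible under the same constraints (matching or clustering). The unknown group structure never enters the algorithm; the groups appear only in the analysis.
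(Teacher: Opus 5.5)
Your proof is correct and takes essentially the same route as the paper: bound $W$'s worst group by its total distance, apply the $\rho$-approximation guarantee against $O$, and bound $O$'s total distance by $k$ times its largest group sum, with Lemma~\ref{lem:group-cost} converting between group costs and distance sums. The only cosmetic difference is that you cancel the $(1+\alpha)$ factor at the start, whereas the paper carries it through the chain of inequalities.
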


\begin{proof}
Let $W$ be the assignment computed by the algorithm, and let $O$ be an optimal assignment for the Max-of-Sum objective.
Since the algorithm is $\rho$-approximate for the SUM problem, we have 
\begin{align} 
    \sum_{i \in N} d(i,w_i) \leq \rho \cdot \sum_{i \in N} d(i,o_i). \label{eq:max-of-sum:full:rho-apx-SUM}
\end{align}
For the optimal assignment $O$, we have that $\cost(O) \geq \cost_g(O)$ for any group $g$. 
By summing these $k$ inequalities (one for each group) and using Lemma~\ref{lem:group-cost}, we obtain 
\begin{align}
    k \cdot \cost(O) 
    &\geq \sum_{g \in G} \cost_g(O) 
    = (1+\alpha) \sum_{g \in G} \sum_{i \in g} d(i,o_i)
    = (1+\alpha) \sum_{i \in N} d(i,o_i). \label{eq:max-of-sum:full:optimal-bound}
\end{align}
Now, let $g$ be the group that determines the Max-of-Sum cost of $W$. By Lemma~\ref{lem:group-cost}, \eqref{eq:max-of-sum:full:rho-apx-SUM} and \eqref{eq:max-of-sum:full:optimal-bound}, we have
\begin{align*}
    \cost(W) = \cost_g(W) 
    &= (1+\alpha) \sum_{i \in g} d(i,w_i) \\
    &\leq (1+\alpha) \sum_{i \in N} d(i,w_i) \\
    &\leq \rho \cdot (1+\alpha)\sum_{i \in N} d(i,o_i) \\
    &\leq \rho k\cdot \cost(O).
\end{align*}
The proof is complete.
\end{proof}

Next, we show a lower bound of $k$ on the distortion of full-information algorithms for one-sided matching and clustering. Together with the above positive result, this establishes a tight bound of $k$ for these two facility assignment problems when the underlying SUM problem is solved optimally, which can be done in polynomial time for one-sided matching but not for clustering. For clustering, this lower bound can be viewed as the burden of not knowing the group structure, even when unlimited computational resources are available to solve the centralized SUM problem. 

\begin{theorem}
For the Max-of-Sum objective, the distortion of any full-information one-sided matching algorithm is at least $k$.
\end{theorem}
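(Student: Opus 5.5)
The plan is to exploit that a full-information algorithm still does not know the groups. I will build an instance where every agent looks the same to the algorithm, so that the algorithm's choice of which agents get a bad facility is arbitrary. The adversary then chooses the group partition after seeing this choice: it packs all the unlucky agents into a single group, while an optimal matching could have spread them one per group. Lemma~\ref{lem:group-cost} turns this into a ratio of exactly $k$.

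\textbf{Instance.} Fix $k\ge 2$ and let $n=3k-2$, so that $n-k=2(k-1)$. Place all $n$ agents at a single point $p$. Place $n-k$ facilities also at $p$ and the remaining $k$ facilities at a point $q$ with $d(p,q)=1$. The metric is trivially valid; if coincident points are a concern, one can perturb with an $\varepsilon$ that is sent to $0$. Any perfect matching assigns exactly $k$ agents to facilities at $q$, each at distance $1$, and all other agents at distance $0$. Because all agents are at the same location, the full-information algorithm cannot tell them apart. Whatever deterministic matching $W$ it outputs determines a set $S$ of $k$ agents matched to $q$.

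\textbf{Adversarial groups and the ratio.} Given $S$, define the groups as follows. Let $g_1=S$, which has size $k\ge 2$. Split the remaining $2(k-1)$ agents into $k-1$ groups of size $2$, giving $k$ groups in total, each of size at least $2$, as the model requires. By Lemma~\ref{lem:group-cost}, $\cost_{g_1}(W)=(1+\alpha)k$, so the Max-of-Sum cost of $W$ is $(1+\alpha)k$. For the same groups, consider the matching $O$ that sends exactly one member of each group to a facility at $q$ and everyone else to a facility at $p$. This is feasible because there are exactly $k$ facilities at $q$ and $k$ groups. Each group then has total cost $(1+\alpha)\cdot 1$, so the optimum is at most $1+\alpha$, and the ratio is at least $k$.

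\textbf{Main obstacle.} The only delicate step is satisfying the model's constraints: every group needs size at least $2$, there must be exactly $k$ groups, and $m=n$. These force the padding with $2(k-1)$ extra agents and the same number of extra facilities at distance $0$. With $n=3k-2$ all constraints hold simultaneously. If instances with larger $n$ are required, the extra agents can be absorbed into the size-$2$ groups together with matching zero-distance facilities, and the argument is unchanged.
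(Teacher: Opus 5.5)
Your proposal is correct and is essentially the paper's construction. All agents are co-located, $k$ facilities sit at distance $1$, and the adversary groups the $k$ agents matched far into one group, while the optimum spreads them one per group. The only difference is cosmetic: you pad with $k-1$ groups of size $2$ ($n=3k-2$), whereas the paper uses groups of size $k$ ($n=k^2$), and this changes nothing in the argument.
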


\begin{proof}
Consider the following instance on a line metric:
\begin{itemize}
    \item There are $n=k^2$ agents located at $0$.
    \item There are $k(k-1)$ facilities located at $0$ and $k$ facilities located at $1$.
\end{itemize}
Let $W$ be the assignment computed by the algorithm on this metric space, and denote by $S$ the set of $k$ agents that are assigned to the facilities at $1$. The unknown groups might be such that
\begin{itemize}
    \item the $k$ agents of $S$ form a group;
    \item the remaining $(k-1)\cdot k$ agents are partitioned arbitrarily into $k-1$ groups of size $k$ each.
\end{itemize} 
Then, the Max-of-Sum cost of the computed matching is realized by the agents in $S$. Since the cost of each such agent is $1+\frac{\alpha}{k-1}(k-1) = 1+\alpha$, the overall cost is 
\[\cost(W) = k\left( 1+ \alpha \right).\]
We can create a matching $O$ such that exactly one agent from each group is assigned to a facility at $1$. Then, in each group, there is an agent with cost $1$ and $k-1$ agents with cost $\frac{\alpha}{k-1}$, leading to an optimal Max-of-Sum cost of 
\[\cost(O) = 1 + (k-1) \frac{\alpha}{k-1} = 1+\alpha,\] 
and a distortion of $k$.
\end{proof}

\begin{theorem}
For the Max-of-Sum objective, the distortion of any full-information clustering algorithm is at least $k$.
\end{theorem}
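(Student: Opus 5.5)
We will construct a clustering instance, in the same spirit as the matching lower bound, in which the algorithm cannot tell which agents form a group. The adversary then places the agents that suffer the most under the computed clustering into a single group.

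\emph{Instance.} Use a line metric with $\lambda = k-1$ cluster centers to be chosen among $m = k$ facilities. Place facility $f_j$ at position $j \cdot D$ for $j=1,\dots,k$, with $D$ large. At the location of each facility $f_j$, place $k$ agents. Any choice of $k-1$ centers leaves exactly one facility location $f_\ell$ unopened. The $k$ agents at $f_\ell$ are then assigned to their closest open center, at distance at least $D$; every other agent is at distance $0$.

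\emph{Groups.} Let $W$ be the computed clustering and $f_\ell$ the unopened facility. The unknown groups might be as follows: the $k$ agents at $f_\ell$ form one group $S$, and the remaining $k(k-1)$ agents, all at distance $0$, are split into $k-1$ groups of size $k$. By Lemma~\ref{lem:group-cost}, $\cost(W)=\cost_S(W)\ge (1+\alpha)kD$.

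\emph{Comparison clustering.} We need an alternative clustering $O$ whose Max-of-Sum cost is about $(1+\alpha)D$. A different choice of closed facility again puts all $k$ of its agents at distance about $D$, and they belong to a single group only if that facility is $f_\ell$. So we should instead redistribute the groups so that the agents at each facility come from different groups. Redefine the groups as follows. Agents at $f_\ell$ are $a_1,\dots,a_k$. Put $a_1,\dots,a_k$ into group $S$. For each other facility $f_j$ with $j\neq \ell$, assign its $k$ agents to $k$ distinct groups among the other $k-1$ groups. This is impossible, since there are $k$ agents and only $k-1$ other groups. So this approach fails; the cleaner fix is to decouple the co-location pattern from the metric.

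\emph{Revised instance.} Mimic the matching lower bound. There are $k^2$ agents, all at position $0$. There are $k$ facilities at $0$, reachable only as "spare" centers, and the distances are chosen so that each agent is close to its own facility. Concretely, give every agent $i$ a private facility $f_i$ at distance $0$, so $m = k^2$, and let all facilities be pairwise at distance $1$ (a uniform metric, with agent $i$ located at $f_i$). Set $\lambda = k^2 - k$. Then exactly $k$ agents are unserved by their own facility, and each such agent pays $1$. The algorithm's closed set $S$ of $k$ agents is declared to be a group, and the others are partitioned arbitrarily, giving $\cost(W) = k(1+\alpha)$. The optimum closes one private facility per group, giving cost $1+\alpha$ per group. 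This yields a ratio of $k$.

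\emph{Main obstacle.} The main check is that the uniform metric with agents co-located at facilities is valid, which it is since distances are all $0$ or $1$. The other check is the clustering convention that each agent goes to its closest open center: a closed agent's nearest open center is at distance exactly $1$, and ties do not matter.
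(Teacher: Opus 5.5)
Your final construction is correct, but it is a different instance from the paper's.

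\textbf{The paper's route.} It uses a line with only three facilities at $0,1,2$, with $\lambda=2$, and $k$, $k(k-1)$, $k$ agents at the three points. It then does a case analysis on which facility is left closed. If the middle facility is closed, each of $k-1$ groups contains $k$ agents at $1$ paying $1$, while the optimum closes $2$ (one agent at $2$ per group). If an endpoint is closed, the $k$ agents there form the bad group, and the optimum closes the other endpoint.

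\textbf{Your route.} You use a uniform metric on $k^2$ locations, each hosting one agent and its private facility, with $\lambda=k^2-k$. This turns clustering directly into the matching lower bound: choosing the $k$ closed facilities is the same as choosing the $k$ agents who pay $1$. By symmetry, no case analysis is needed. The adversary declares exactly those agents a group, giving $\cost(W)=k(1+\alpha)$ against $1+\alpha$ when one private facility per group is closed.

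\textbf{Comparison.} Your route buys simplicity and uniformity. The paper's buys a stronger statement: the bound already holds with constantly many facilities ($m=3$) and centers ($\lambda=2$) on a line metric. Yours needs $m$ and $\lambda$ growing quadratically in $k$. It also needs a metric that cannot be realized on a line, since three pairwise-equidistant points do not fit on a line.

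\textbf{Presentation fixes.}
\begin{itemize}
\item Delete the abandoned first attempt.
\item Delete the leftover sentences saying the $k^2$ agents are all at position $0$ and that there are $k$ spare facilities at $0$; they contradict the concrete instance.
\item Justify the triangle inequality properly. Note that distance $0$ occurs only between an agent and its own facility, so this is the uniform metric on $k^2$ locations. Saying the distances are all $0$ or $1$ is not enough by itself, because a $0/1$ distance function need not be a metric.
\end{itemize}
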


\begin{proof}
Consider the following instance on a line metric:
\begin{itemize}
    \item There are three facilities at $0$, $1$ and $2$.
    \item There are $k$ agents at $0$, $(k-1)k$ agents at $1$, and $k$ agents at $2$.
\end{itemize}
We switch between the possible locations of the cluster centers chosen by the algorithm; recall that each agent is assigned to its closest cluster center.

\medskip
\noindent
{\bf Case 1:} The cluster centers are $(0,2)$.
The $k$ groups might be as follows:
\begin{itemize}
\item The first group consists of the $k$ agents at $0$ and one agent at $2$.
\item Each of the remaining $k-1$ groups consists of $k$ agents at $1$ and one agent at $2$.
\end{itemize}
Then, the Max-of-Sum cost of the computed clustering $W$ is realized by any of the last $k-1$ groups.
In each such group, the agents at $1$ have cost $1+\frac{\alpha}{k}(k-1)$ and the agent at $2$ has cost $\frac{\alpha}{k}k = \alpha$. 
Hence,
\[\cost(W) = k\left(1+\frac{k-1}{k}\alpha\right) + \alpha = k(1 + \alpha).\] 
By choosing $(0,1)$ for the cluster centers, we have a clustering $O$ such that, for each group, the agent at $2$ has cost $1$ and each of the remaining $k$ agents (either at $0$ in the first group, or at $1$ in any of the last $k-1$ groups) has cost $\frac{\alpha}{k}$. So, the optimal Max-of-Sum cost is
\[\cost(O) = 1+k\cdot \frac{\alpha}{k} = 1+\alpha,\]
and the distortion is at least $k$ in this case. 

\medskip
\noindent
{\bf Case 2:} The cluster centers are $(0,1)$ (the case of $(1,2)$ is symmetric).
The $k$ groups might be as follows:
\begin{itemize}
    \item The first group consists of the $k$ agents at $2$ and one agent at $0$.
    \item Each of the remaining $k-1$ groups consists of $k$ agents at $1$ and one agent at $0$.
\end{itemize}
The Max-of-Sum cost of the computed clustering $W$ is realized by the first group. 
Each of the $k$ agents at $2$ has a cost of $1+\frac{\alpha}{k}(k-1)$ and the agent at $0$ has cost $\frac{\alpha}{k}k=\alpha$. 
Hence, the Max-of-Sum cost is again 
\[\cost(W) = k\left(1+\frac{k-1}{k}\alpha\right)+\alpha = k(1+\alpha).\]
By choosing $(1,2)$, we have a clustering $O$ such that, for each group, the agent at $0$ has cost $1$ and each of the $k$ remaining agents (either at $2$ in the first group, or at $1$ in any of the last $k-1$ groups) has cost $\frac{\alpha}{k}$, for a total of 
\[\cost(O) = 1+k\cdot\frac{\alpha}{k} = 1+\alpha.\]
Hence, the distortion is again at least $k$.
\end{proof}

\subsection{Ordinal-Information Algorithms} \label{sec:max-of-sum:ordinal}
We now turn our attention to the class of ordinal-information algorithms for the Max-of-Sum objective. Recall that such algorithms have access to the distances between the facilities in the metric space and to the ordinal preferences of the agents over the facilities. As with full-information algorithms, we again use a $\rho$-approximate algorithm for the SUM problem, but since we do not have complete knowledge of the metric space, we form a proxy instance in which each agent $i$ is located at its top-ranked (i.e., closest) facility $\top(i)$. Let $\widehat{W}=(\widehat{w}_i)_{i\in N}$ denote the assignment computed by the approximation algorithm in this proxy instance.

For clustering, the proxy algorithm selects a set $C_W$ of cluster centers, and $\widehat{w}_i$ is a closest center in $C_W$ to $\top(i)$. In the true instance, the ordinal algorithm assigns agent $i$ to its highest-ranked center $w_i$ in $C_W$, which is exactly a closest selected center to $i$. For all other facility assignment problems, the algorithm simply uses the proxy assignment, so $w_i=\widehat{w}_i$. Consequently, in all cases,
\begin{align} \label{eq:minitop-sum:transfer}
    d(i,w_i) \leq d(i,\top(i)) + d(\top(i),\widehat{w}_i).
\end{align}
Our ordinal algorithm, named {\sc $\rho$-MiniTopSUM}, therefore outputs the assignment $W=(w_i)_{i\in N}$, while the proxy assignment $\widehat{W}$ satisfies, for every feasible comparison assignment $X=(x_i)_{i\in N}$,
\begin{align} \label{eq:minitop-sum:proxy}
   \sum_{i \in N} d(\top(i),\widehat{w}_i) \leq \rho \cdot \sum_{i \in N} d(\top(i),x_i).
\end{align}
For clustering, Inequality~\eqref{eq:minitop-sum:proxy} follows by comparing the selected center set with the center set inducing $X$: assigning each proxy agent to its closest selected center can only decrease its distance.

\begin{theorem} \label{thm:max-of-sum:ordinal:upper}
Given any $\rho$-approximate algorithm for the SUM problem, the distortion of {\sc $\rho$-MiniTopSUM} is at most $2\rho k+1$ in terms of the Max-of-Sum objective. 
\end{theorem}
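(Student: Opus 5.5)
We follow the proof of Theorem~\ref{thm:max-of-sum:full:upper}, replacing the exact $\rho$-approximation guarantee with the proxy guarantee \eqref{eq:minitop-sum:proxy} and paying for the move to the proxy instance with the triangle inequality. Let $W$ be the output of {\sc $\rho$-MiniTopSUM} and $O$ an optimal assignment for Max-of-Sum. Let $g$ be the group that determines $\cost(W)$. By Lemma~\ref{lem:group-cost} and \eqref{eq:minitop-sum:transfer},
\[
\cost(W) = (1+\alpha)\sum_{i\in g} d(i,w_i) \le (1+\alpha)\sum_{i\in g} d(i,\top(i)) + (1+\alpha)\sum_{i\in N} d(\top(i),\widehat{w}_i).
\]
We bound the two terms separately.

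\textbf{First term.} Since $\top(i)$ is $i$'s closest facility, $d(i,\top(i)) \le d(i,o_i)$ for every $i$. Hence
\[
(1+\alpha)\sum_{i\in g} d(i,\top(i)) \le (1+\alpha)\sum_{i\in g} d(i,o_i) = \cost_g(O) \le \cost(O).
\]
This produces the additive $+1$ in the bound. The point to note is that we keep only group $g$ here rather than summing over all of $N$, which would cost a factor $k$.

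\textbf{Second term.} Apply \eqref{eq:minitop-sum:proxy} with the comparison assignment $X=O$, which is feasible. Then use the triangle inequality $d(\top(i),o_i) \le d(\top(i),i)+d(i,o_i) \le 2\,d(i,o_i)$, again by the minimality of $\top(i)$. This gives
\[
(1+\alpha)\sum_{i\in N} d(\top(i),\widehat{w}_i) \le 2\rho(1+\alpha)\sum_{i\in N} d(i,o_i) \le 2\rho k\cdot \cost(O),
\]
where the last step is inequality \eqref{eq:max-of-sum:full:optimal-bound} from the proof of Theorem~\ref{thm:max-of-sum:full:upper}, i.e., summing $\cost_g(O)\le\cost(O)$ over the $k$ groups together with Lemma~\ref{lem:group-cost}.

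Adding the two bounds gives $\cost(W)\le(2\rho k+1)\cost(O)$.

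The only delicate step is the clustering case, where $w_i$ differs from $\widehat{w}_i$. It is handled entirely by \eqref{eq:minitop-sum:transfer} and \eqref{eq:minitop-sum:proxy}, which the paper has already justified. Given those two facts, the rest is a direct calculation.
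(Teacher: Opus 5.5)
Your proof is correct and follows the paper's own argument step for step. You split the worst group's cost via the transfer inequality, charge the first part to $\cost_g(O)\le\cost(O)$, and bound the second part using the proxy guarantee with $X=O$, the triangle inequality $d(\top(i),o_i)\le 2d(i,o_i)$, and Inequality~\eqref{eq:max-of-sum:full:optimal-bound}.
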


\begin{proof}
The proof follows along the lines of the proof of Theorem~\ref{thm:max-of-sum:full:upper}. Let $W$ be the assignment computed by the algorithm, and let $O$ be an optimal assignment. To lower-bound the optimal cost, we can again use Inequality~\eqref{eq:max-of-sum:full:optimal-bound} stating that
\begin{align*}
    k \cdot \cost(O) \geq (1+\alpha) \sum_{i \in N} d(i,o_i).
\end{align*}
Now, let $g$ be the group that determines the cost of $W$. Using Lemma~\ref{lem:group-cost}, the transfer bound in Inequality~\eqref{eq:minitop-sum:transfer}, the $\rho$-approximation guarantee of the proxy assignment in Inequality~\eqref{eq:minitop-sum:proxy}, the fact that $d(i,\top(i)) \leq d(i,o_i)$, and the above bound on the optimal cost (Inequality~\eqref{eq:max-of-sum:full:optimal-bound}), we get
\begin{align*}
\cost(W) 
    &= \cost_g(W) \\ 
    &= (1+\alpha) \sum_{i \in g} d(i,w_i) \\
    &\leq (1+\alpha) \sum_{i \in g} d(i,\top(i)) +  (1+\alpha) \sum_{i \in g} d(\top(i),\widehat{w}_i) \\
    &\leq (1+\alpha) \sum_{i \in g} d(i,o_i) + (1+\alpha) \cdot \rho \sum_{i \in N} d(\top(i),o_i) \\
    &\leq \cost_g(O) + (1+\alpha) \cdot \rho \sum_{i \in N} \bigg( d(i,\top(i)) + d(i,o_i) \bigg) \\
    &\leq \cost(O) + 2\rho \cdot (1+\alpha) \sum_{i \in N}  d(i,o_i) \\
    &\leq (2\rho k + 1) \cdot \cost(O).
\end{align*}
The proof is complete.
\end{proof}

We now show a lower bound of $2k+1$ on the distortion of any ordinal-information algorithm for one-sided matching and clustering. For one-sided matching, this implies that $2k+1$ is the best possible overall distortion, while for clustering, this is a lower bound on the distortion even when unlimited computing resources are available. 

\begin{theorem}
For the Max-of-Sum objective, the distortion of any ordinal-information matching algorithm is at least $2k+1$.
\end{theorem}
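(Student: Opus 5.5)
The plan is to combine the group-structure trick from the full-information lower bound with the classic ordinal ``midpoint'' ambiguity. The extra $+1$ will come from a group that has size $k+1$ instead of $k$.

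\textbf{Instance.} Use a line metric with $k^2$ facilities at $0$ and $k$ facilities at $2$, so $m=n=k(k+1)$ and one-sided matching applies to $n=k(k+1)$ agents. Every agent reports the same ranking: all facilities at $0$ first, in a fixed order, then all facilities at $2$. This profile is consistent with an agent located at $0$. It is also consistent with an agent located at $1$, which is equidistant from all facilities and so admits any ranking. The algorithm therefore sees a fully symmetric profile.

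\textbf{Adversarial completion.} Let $W$ be the matching the algorithm outputs, and let $S$ be the $k$ agents it sends to the facilities at $2$. The adversary now fixes the metric and the groups as follows.
\begin{itemize}
\item The agents of $S$ are placed at $0$.
\item Exactly $k$ agents outside $S$ are placed at $1$.
\item All remaining agents are placed at $0$.
\item The groups are $k$ groups of size $k+1$. One group is $S$ together with one agent at $1$. Each of the other $k-1$ groups contains exactly one agent at $1$ and $k$ agents at $0$ not in $S$.
\end{itemize}
The counts work out: $k$ agents at $1$, $k$ agents in $S$, and $k^2-k$ other agents at $0$, for a total of $k(k+1)$. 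Every agent's reported ranking remains consistent with its position.

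\textbf{Costs.} In $W$, each agent of $S$ is at distance $2$ from its facility. The agent at $1$ in the group of $S$ must go to a facility at $0$, since all facilities at $2$ are taken, so it is at distance $1$. By Lemma~\ref{lem:group-cost}, this group has cost $(1+\alpha)(2k+1)$, and hence $\cost(W)\ge(1+\alpha)(2k+1)$.

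For the comparison matching $O$, send the $k$ agents at $1$ to the $k$ facilities at $2$ and all agents at $0$ to the facilities at $0$. Each group then has exactly one agent at distance $1$ and all others at distance $0$. By Lemma~\ref{lem:group-cost}, every group costs $1+\alpha$, so $\cost(O)=1+\alpha$, and the ratio is $2k+1$.

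\textbf{Main obstacle.} The delicate step is ensuring the argument is valid whatever $W$ the algorithm chooses, including a randomized or tie-dependent one. Two facts settle this. First, the ordinal input is identical for every agent, so all information the algorithm has is fixed before $S$ is determined. Second, the adversary's choices of positions and groups depend only on $S$, and they remain consistent with the reported profile because ties at position $1$ are allowed. I would also double-check the feasibility requirement that exactly $k$ agents outside $S$ are placed at $1$, which needs $k^2\ge k$ and so holds for all $k\ge 2$.
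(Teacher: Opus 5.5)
Your proof is correct and is the paper's construction scaled by a factor of $2$: facilities at $0$ and $2$ with midpoint-located agents at $1$, versus the paper's facilities at $0$ and $1$ with agents at $1/2$, using the same group structure and the same comparison matching. One caveat: your aside that the argument also covers randomized algorithms is not right, because the adversary would have to fix positions and groups before the random set $S$ is drawn. This does not affect the theorem, since the paper's distortion is defined for deterministic algorithms.
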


\begin{proof}
Consider the following instance:
\begin{itemize}
    \item There are $k^2$ facilities located at $0$ and $k$ facilities located at $1$.
    \item There are $k(k+1)$ agents with the same ordinal preference over the locations: $0 \succ 1$. That is, the agents prefer all the facilities at $0$ over all the facilities at $1$; ties for facilities at the same location are broken arbitrarily.
\end{itemize}
Let $W$ be the matching computed by the algorithm, and denote by $S$ the set of $k$ agents that are assigned to the facilities at $1$.
Consider the following partition of the agents into $k$ groups and a consistent metric space:
\begin{itemize}
    \item The first group is of size $k+1$. It consists of the agents in $S$ that are assigned to the facilities at $1$ and are located at $0$, and one other agent who is assigned to a facility at $0$ and is located at $1/2$.
    \item Each of the remaining $k-1$ groups also has size $k+1$. It consists of $k$ agents that are located at $0$ and one agent that is located at $1/2$. All of these agents are assigned to facilities at $0$. 
\end{itemize}
The Max-of-Sum cost of $W$ is realized by the first group. Each agent in $S$ has a cost of 
$1+ \frac{\alpha}{k}\left(k-1+\frac12\right)$ and the agent at $1/2$ has a cost of $\frac12 + \frac{\alpha}{k}k$. 
Hence,  
\begin{align*}
    \cost(W) = k \bigg( 1+ \frac{\alpha}{k}\left(k-\frac12\right) \bigg) + \frac12 + \alpha = \left(k+\frac12\right) (1+\alpha).
\end{align*}
We can create an optimal matching such that all the $k^2$ agents that are located at $0$ (there are $k$ such agents per group) are assigned to the $k^2$ facilities at $0$, and the remaining $k$ agents that are located at $1/2$ (one per group) are assigned to the $k$ facilities at $1$. In each group, there is an agent with cost $1/2$ and $k$ agents with cost $\frac{\alpha}{k} \cdot \frac12$. Hence, the optimal cost is  
\begin{align*}
\cost(O) = \frac12 + k \cdot \frac{\alpha}{2k} = \frac12 (1+\alpha),
\end{align*}
and the distortion is $2k+1$.
\end{proof}

\begin{theorem}
For the Max-of-Sum objective, the distortion of any ordinal-information clustering algorithm is at least $2k+1$.
\end{theorem}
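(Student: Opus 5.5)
The plan is to lift the three-point construction from the full-information clustering lower bound to the ordinal setting, in the same way the ordinal matching lower bound lifted its full-information counterpart. The ``odd'' agent of each group will sit at a midpoint, so that it is ordinally indistinguishable from a well-placed agent. I will fix a single preference profile and then, depending on the centers the algorithm picks, choose a consistent metric and a group partition. I will aim for every group of the optimum to have total distance $1/2$, and for some group of the algorithm to have total distance $k+1/2$. By Lemma~\ref{lem:group-cost} the ratio is then $(1+\alpha)(k+\frac12)/((1+\alpha)\frac12)=2k+1$.

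\textbf{Instance.} There are three facilities on a line at $0,1,2$ and $\lambda=2$ centers. There are $k$ agents with preference $0\succ 1\succ 2$, $k(k-1)$ agents with top choice $1$ (say $1\succ 0\succ 2$), and $k$ agents with preference $2\succ 1\succ 0$. Agents ranking $1$ first will always be placed exactly at $1$, where their ranking is a tie and hence consistent. Agents with top $0$ will be placed either at $0$ or at $1/2$; both placements are consistent with $0\succ1\succ2$, using tie-breaking at $1/2$. Agents with top $2$ will be placed either at $2$ or at $3/2$.

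\textbf{Case $(0,2)$.} Put every top-$0$ agent at $0$, every top-$1$ agent at $1$, and every top-$2$ agent at $3/2$. Group $1$ consists of the $k$ top-$0$ agents plus one top-$2$ agent. Each of the other $k-1$ groups consists of $k$ top-$1$ agents plus one top-$2$ agent. In any of the latter groups the algorithm pays total distance $k\cdot 1+\frac12$. The clustering $(0,1)$ instead gives every group total distance $\frac12$, coming only from its top-$2$ agent.

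\textbf{Case $(0,1)$.} Put the top-$2$ agents at $2$, the top-$1$ agents at $1$, and the top-$0$ agents at $1/2$. Group $1$ is the $k$ top-$2$ agents plus one top-$0$ agent. Each other group is $k$ top-$1$ agents plus one top-$0$ agent. Group $1$ has total distance $k+\frac12$ under the algorithm. The clustering $(1,2)$ gives every group total distance $\frac12$.

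\textbf{Case $(1,2)$.} This is the mirror image of the case $(0,1)$, since the profile is symmetric under $x\mapsto 2-x$ apart from the rankings of the top-$1$ agents. Those rankings are irrelevant because these agents sit exactly at $1$.

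The only delicate step is checking consistency of every placement with the fixed rankings in each case, together with confirming that each agent is assigned to its closest chosen center. Ties at midpoints are the crux: for example, an agent at $3/2$ is equidistant from $1$ and $2$, so it may rank $2\succ1$ while under $(0,1)$ it is served by $1$ at distance $1/2$. Once this is verified, Lemma~\ref{lem:group-cost} converts the distance sums into costs, giving distortion $2k+1$ in every case.
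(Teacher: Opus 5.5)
Your proposal is correct and essentially matches the paper's proof: the same three-facility instance, the same midpoint placements at $1/2$ and $3/2$, and the same groupings and comparison clusterings in each case, converted to costs via Lemma~\ref{lem:group-cost}. The only deviation is that you give all top-$1$ agents the ranking $1\succ 0\succ 2$, whereas the paper splits them evenly into $S_0$ and $S_2$ so that the profile is literally symmetric. Your variant is harmless, because those agents sit at $1$, equidistant from $0$ and $2$; the price is that the $(1,2)$ case must be checked directly rather than by appeal to symmetry, which you correctly do.
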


\begin{proof}
Consider the following instance:
\begin{itemize}
    \item There are three facilities at $0$, $1$ and $2$.
    \item There are $k$ agents with ranking $0 \succ 1 \succ 2$.
    \item There are $k$ agents with ranking $2 \succ 1 \succ 0$.
    \item There are $(k-1)k$ agents with $1$ as their most-preferred location. These agents are partitioned into sets $S_0$ and $S_2$ of equal size, such that the agents in $S_0$ prefer $0$ over $2$ (i.e., their ranking is $1 \succ 0 \succ 2$), and the agents in $S_2$ prefer $2$ over $0$ (i.e., their ranking is $1 \succ 2 \succ 0$). 
\end{itemize}
We switch between the possible locations of the cluster centers chosen by the algorithm; recall that each agent is assigned to its closest cluster center.

\medskip
\noindent
{\bf Case 1:} The cluster centers are $(0,2)$.
The $k$ groups and the consistent metric space might be as follows:
\begin{itemize}
\item The first group consists of the $k$ agents with ranking $0 \succ 1 \succ 2$ that are positioned at $0$, and one agent with ranking $2 \succ 1 \succ 0$ that is positioned at $3/2$.

\item Each of the remaining $k-1$ groups consists of $k$ agents with $1$ as their most-preferred location that are positioned at $1$, and one agent with ranking $2 \succ 1 \succ 0$ that is positioned at $3/2$.
\end{itemize}
The Max-of-Sum cost of the computed clustering $W$ is realized by any of the last $k-1$ groups.
In each such group, the agents at $1$ have cost $1+\frac{\alpha}{k}(k-1+1/2)=1+\frac{k-1/2}{k}\alpha$ (no matter whether they are assigned to $0$ or $2$), and the agent at $3/2$ has cost $1/2 + \frac{\alpha}{k}k = 1/2 + \alpha$ (as it is assigned to $2$). 
Hence,
\begin{align*}
    \cost(W) = k \left(1+\frac{k-1/2}{k}\alpha\right) + 1/2 + \alpha = \left(k+\frac12\right)(1+\alpha).
\end{align*}
By choosing $(0,1)$ as the cluster centers, we can compute a clustering $O$ such that, for each group, 
the agent at $3/2$ has cost $1/2$ (as it is assigned to $1$) and 
each of the remaining $k$ agents (either at $0$ in the first group, or at $1$ in any of the last $k-1$ groups) 
has cost $\frac{\alpha}{k}\cdot \frac12$, for a total of 
\begin{align*}
    \cost(O) = \frac12 + k\cdot \frac{\alpha}{2k} = \frac12 (1+\alpha).
\end{align*}
Hence, the distortion is at least $2k+1$ in this case.

\medskip
\noindent
{\bf Case 2:} The cluster centers are $(0,1)$ (the case $(1,2)$ is symmetric).
The $k$ groups and the consistent metric space might be as follows:
\begin{itemize}
    \item The first group consists of the $k$ agents with ranking $2 \succ 1 \succ 0$ that are positioned at $2$, and one agent with ranking $0 \succ 1\succ 2$ that is positioned at $1/2$.
    
    \item Each of the remaining $k-1$ groups consists of $k$ agents with $1$ as their most-preferred location that are positioned at $1$, and one agent with ranking $0 \succ 1 \succ 2$ that is positioned at $1/2$.
\end{itemize}
The Max-of-Sum cost of the computed clustering $W$ is realized by the first group. 
Each of the $k$ agents at $2$ has cost $1+\frac{\alpha}{k}(k-1+1/2) = 1+\frac{k-1/2}{k}\alpha$ (as they are assigned to $1$), 
and the agent at $1/2$ has cost $1/2 + \frac{\alpha}{k}k = 1/2 + \alpha$ (no matter whether it is assigned to $0$ or $1$). 
So,
\begin{align*}
    \cost(W) = k \left(1+\frac{k-1/2}{k}\alpha\right) + 1/2 + \alpha = \left(k+\frac12\right)(1+\alpha).
\end{align*}
By choosing $(1,2)$ as the cluster centers, we can compute a clustering $O$ such that, for each group, the agent at $1/2$ has cost $1/2$ (as it is assigned to $1$) and each of the remaining $k$ agents (either at $2$ in the first group, or at $1$ in any of the remaining $k-1$ groups) has cost $\frac{\alpha}{k}\cdot \frac12$, for a total of 
\begin{align*}
    \cost(O) = \frac12 + k\cdot\frac{\alpha}{2k} = \frac12 (1+\alpha).
\end{align*}
Hence, the distortion is again at least $2k+1$.
\end{proof}

\section{Sum-of-Max} \label{sec:sum-of-max}
In this section, we consider the second objective, Sum-of-Max. For this objective, we will need the following technical lemma, which is similar in spirit to Lemma~\ref{lem:group-cost} but involves agents that maximize their assignment distance within their groups. We first observe that every such agent also has maximum individual cost in its group. For simplicity, we assume that the $\argmax$ operator returns a single arbitrary element among those that satisfy the condition.

\begin{lemma} \label{lem:separate-group-cost}
For any assignment $X$, let
\[
S_X = \bigcup_{g \in G} \argmax_{i \in g} d(i,x_i)
\]
be a $k$-sized set containing one maximum-assignment-distance agent from each group. Every agent in $S_X$ has maximum individual cost within its group, and
\begin{align*}
    \sum_{i \in S_X} \cost_i(X) \leq (1+\alpha) \sum_{i \in S_X} d(i,x_i).
\end{align*}
\end{lemma}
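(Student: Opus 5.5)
The plan is to handle the lemma group by group. Fix a group $g$ and let $i^* \in g$ be the agent chosen by the $\argmax$, so that $d(i^*,x_{i^*}) \geq d(j,x_j)$ for every $j \in g$. The first claim, that $i^*$ has maximum individual cost in $g$, comes from rewriting the cost as the paper does in the proof of Lemma~\ref{lem:group-cost}:
\[
\cost_i(X) = \left(1-\frac{\alpha}{n_g-1}\right) d(i,x_i) + \frac{\alpha}{n_g-1}\sum_{j\in g} d(j,x_j).
\]
The second term does not depend on $i$, so $\cost_i(X)$ is an affine function of $d(i,x_i)$ with slope $1-\frac{\alpha}{n_g-1}$. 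This slope is nonnegative because $\alpha \le 1 \le n_g-1$, using $n_g\ge 2$. Hence $\cost_i(X)$ is nondecreasing in $d(i,x_i)$, and an agent maximizing $d(i,x_i)$ also maximizes $\cost_i(X)$. The only point needing care is the sign of the slope. When it equals zero, all agents in the group tie, and the claim still holds.

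For the inequality, I would bound each summand separately. Every $j \in g\setminus\{i^*\}$ satisfies $d(j,x_j)\le d(i^*,x_{i^*})$. The average of the other members' distances is therefore at most $d(i^*,x_{i^*})$, which gives
\[
\cost_{i^*}(X) \le d(i^*,x_{i^*}) + \alpha\, d(i^*,x_{i^*}) = (1+\alpha)\, d(i^*,x_{i^*}).
\]
Summing this bound over the $k$ groups, with exactly one element of $S_X$ per group, gives the stated inequality. I expect no real obstacle; the one thing to get right is that the first claim needs $n_g \ge 2$ and $\alpha\le 1$.
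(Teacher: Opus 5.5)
Your proposal is correct and follows essentially the same route as the paper. For the first claim you rewrite $\cost_i(X)$ as an affine function of $d(i,x_i)$ with nonnegative slope $1-\frac{\alpha}{n_g-1}$. For the inequality you bound each other member's distance by $d(i^*,x_{i^*})$ to get $\cost_{i^*}(X)\le(1+\alpha)\,d(i^*,x_{i^*})$, then sum over the $k$ groups.
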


\begin{proof}
Fix a group $g$ and write $D_g(X)=\sum_{j\in g}d(j,x_j)$. For every agent $i\in g$,
\begin{align*}
\cost_i(X)
&=d(i,x_i)+\frac{\alpha}{n_g-1}\bigl(D_g(X)-d(i,x_i)\bigr)\\
&=\left(1-\frac{\alpha}{n_g-1}\right)d(i,x_i)
  +\frac{\alpha}{n_g-1}D_g(X).
\end{align*}
The second term is the same for every member of $g$, while
$1-\frac{\alpha}{n_g-1}\geq 0$ because $\alpha\in[0,1]$ and $n_g\geq2$.
Consequently, every agent that maximizes $d(i,x_i)$ in $g$ also maximizes
$\cost_i(X)$ in $g$.

Now consider any agent $i\in S_X$ belonging to group $g_i$. Since
$d(j,x_j)\leq d(i,x_i)$ for every $j\in g_i\setminus\{i\}$, we obtain
\begin{align*}
   \sum_{i \in S_X} \cost_i(X)
    &=  \sum_{i\in S_X} \bigg( d(i,x_i) + \frac{\alpha}{n_{g_i}-1} \sum_{j \in g_i\setminus\{i\}} d(j,x_j) \bigg) \\
    &\leq \sum_{i\in S_X} \bigg( d(i,x_i) + \frac{\alpha}{n_{g_i}-1} \cdot (n_{g_i}-1) \cdot d(i,x_i) \bigg) \\
    &= (1+\alpha) \cdot \sum_{i\in S_X} d(i,x_i),
\end{align*}
as desired.
\end{proof}

\subsection{Full Information Algorithms} \label{sec:sum-of-max:full}
As in Section~\ref{sec:max-of-sum:full}, we have complete knowledge of the metric space. Given the structure of the Sum-of-Max objective, it is natural to consider $\rho$-approximation algorithms for the {\em MAX problem} of minimizing the maximum distance among all agents from their assigned facilities.\footnote{For one-sided matching, the MAX problem is the bottleneck matching problem~\citep{Derigs1979bottleneck}, while for clustering, it is equivalent to $\lambda$-center.} In other words, our algorithm outputs an assignment $W$ which, for any assignment $X$, satisfies the inequality 
\begin{align*}
    \max_{i \in N} d(i,w_i) \leq \rho \cdot \max_{i \in N} d(i,x_i). 
\end{align*}
We show that such an assignment achieves a distortion of $\rho k (1+\alpha)$ for the Sum-of-Max objective. 

\begin{theorem} \label{thm:sum-of-max:full:upper}
Any $\rho$-approximate algorithm for the MAX problem achieves a distortion of at most $(1+\alpha) \rho k $ in terms of the Sum-of-Max objective.
\end{theorem}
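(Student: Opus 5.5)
We argue in the same way as for Theorem~\ref{thm:max-of-sum:full:upper}: upper-bound the algorithm's cost using Lemma~\ref{lem:separate-group-cost}, and lower-bound the optimal cost by the single largest assignment distance. Let $W$ be the assignment returned by the $\rho$-approximate algorithm for the MAX problem, and let $O$ be an optimal assignment for Sum-of-Max.

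\textbf{Upper bound on $\cost(W)$.} Take the set $S_W$ from Lemma~\ref{lem:separate-group-cost}, which contains one maximum-assignment-distance agent per group. By that lemma, each agent of $S_W$ has maximum individual cost within its group. Therefore
\begin{align*}
\cost(W)=\sum_{i\in S_W}\cost_i(W)\leq (1+\alpha)\sum_{i\in S_W} d(i,w_i)\leq (1+\alpha)\,k\max_{i\in N} d(i,w_i).
\end{align*}
The last step uses $|S_W|=k$. Applying the $\rho$-approximation guarantee with $X=O$ gives
\begin{align*}
\max_{i\in N} d(i,w_i)\leq \rho\max_{i\in N} d(i,o_i).
\end{align*}

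\textbf{Lower bound on $\cost(O)$.} Let $i^*$ be an agent attaining $\max_{i\in N} d(i,o_i)$, and let $g^*$ be its group. Every individual cost is at least the agent's own distance, since the $\alpha$-term is nonnegative. All terms of the Sum-of-Max objective are also nonnegative. Hence
\begin{align*}
\cost(O)\geq \max_{i\in g^*}\cost_i(O)\geq \cost_{i^*}(O)\geq d(i^*,o_{i^*})=\max_{i\in N} d(i,o_i).
\end{align*}

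\textbf{Combining.} Chaining the two bounds gives $\cost(W)\leq (1+\alpha)\rho k\cdot\cost(O)$, which is the claimed bound.

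There is no real obstacle here. The only point to check is that the lower bound on $\cost(O)$ needs just the single farthest agent, with no $(1+\alpha)$ factor. Dropping that factor is what leaves $(1+\alpha)$ in the final bound, consistent with the statement.
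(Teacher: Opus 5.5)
Your proof is correct and follows the paper's argument essentially step for step. The upper bound on $\cost(W)$, via $S_W$, Lemma~\ref{lem:separate-group-cost}, $|S_W|=k$ and the $\rho$-approximation guarantee, is identical to the paper's, and your lower bound $\cost(O)\geq \max_{i\in N} d(i,o_i)$, obtained through the group of the farthest agent, is only a rephrasing of the paper's bound through $S_O$.
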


\begin{proof}
Let $W$ be the assignment computed by the algorithm, and let $O$ be an optimal assignment for the Sum-of-Max objective. Since the algorithm is $\rho$-approximate for the MAX problem, we have
\begin{align} \label{eq:sum-of-max:full:algorithm}
    \max_{i \in N} d(i,w_i) \leq \rho \cdot \max_{i \in N} d(i,o_i). 
\end{align}
Let $S_O$ be the set defined in Lemma~\ref{lem:separate-group-cost} for the assignment $O$. Since its representative from each group also has maximum individual cost in that group,
\begin{align} \label{eq:sum-of-max:full:optimal}
    \cost(O) = \sum_{i \in S_O} \cost_{i}(O)
    \geq \max_{i \in S_O} d(i,o_i)
    = \max_{i \in N} d(i,o_i).
\end{align}
Now, let $S_W$ be the set defined in Lemma~\ref{lem:separate-group-cost} for the assignment $W$. By Lemma~\ref{lem:separate-group-cost}, \eqref{eq:sum-of-max:full:algorithm} and \eqref{eq:sum-of-max:full:optimal}, we have
\begin{align*}
    \cost(W)
    &= \sum_{i\in S_W}\cost_i(W) \\
    &\leq (1+\alpha) \cdot \sum_{i\in S_W} d(i,w_i) \\
    &\leq (1+\alpha) k \cdot \max_{i\in S_W} d(i,w_i) \\
    &\leq (1+\alpha) k \cdot \max_{i\in N} d(i,w_i) \\
    &\leq (1+\alpha) \rho k \cdot \max_{i\in N} d(i,o_i) \\
    &\leq (1+\alpha) \rho k \cdot \cost(O).
\end{align*}
The proof is complete.
\end{proof}

We now show a lower bound of $(k-1)(1+\alpha)+1$ for one-sided matching and clustering. Together with Theorem~\ref{thm:sum-of-max:full:upper}, this lower bound is asymptotically tight in $k$ whenever the underlying MAX problem can be solved optimally.

\begin{theorem}
For the Sum-of-Max objective, the distortion of any full-information matching algorithm is at least $(1+\alpha)(k-1)+1$.
\end{theorem}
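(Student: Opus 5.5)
The plan is to adapt the line-metric construction from the Max-of-Sum lower bound for matching, with two changes. First, there will be $2k-1$ facilities at distance $1$ instead of $k$. Second, the adversary will split the far-assigned agents into pairs. Fix a large integer $M$. All $n = 2k-2+M$ agents are located at $0$. There are $n-(2k-1) = M-(k-1)$ facilities at $0$ and $2k-1$ facilities at $1$. Every agent is at the same point, so every perfect matching sends exactly $2k-1$ agents to facilities at $1$, at distance $1$, and all other agents to distance $0$. In particular, whatever matching $W$ the algorithm outputs (even with full information), there is a set $S$ of exactly $2k-1$ agents with $d(i,w_i)=1$.

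\textbf{Adversarial groups after seeing $W$.} I will choose the groups as follows. Split $2k-2$ agents of $S$ into $k-1$ pairs, each forming a group of size $2$. The last group consists of the remaining agent of $S$ together with all $M-(k-1)$ agents matched to facilities at $0$. This last group has size $M-k+2 \geq 2k-1$ once $M$ is large.
\begin{itemize}
\item In each pair, both members have cost $1+\frac{\alpha}{1}\cdot 1 = 1+\alpha$.
\item In the last group, the far agent has cost exactly $1$, since all its group mates are at distance $0$.
\end{itemize}
Hence $\cost(W) = (k-1)(1+\alpha)+1$.

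\textbf{A good matching for the same groups.} The comparison matching $O$ assigns all $2k-1$ facilities at $1$ to members of the last group, which has enough members. Every agent in the $k-1$ pairs is then matched at $0$, so those groups contribute $0$. In the last group, the maximum individual cost is that of an agent at distance $1$:
\[
1 + \frac{\alpha(2k-2)}{M-k+1}.
\]
Therefore $\cost(O) \le 1 + \frac{\alpha(2k-2)}{M-k+1}$, which tends to $1$ as $M\to\infty$. Since distortion is a supremum over instances, the ratio $\cost(W)/\cost(O)$ approaches $(1+\alpha)(k-1)+1$, giving the claimed bound.

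\textbf{Main obstacle: forcing a bad outcome.} The key difficulty is making the algorithm's output unavoidably bad even though it sees the whole metric. The trick is that all agents are co-located, so the algorithm's only choice is which agents are far, and it cannot influence how many are far. The adversary then uses the unknown groups in two different ways.
\begin{itemize}
\item For $W$, it packs the far agents into tiny groups, where the $\alpha$-term is fully charged.
\item For $O$, it relocates all far assignments into one huge group, where the $\alpha$-term is diluted towards $0$.
\end{itemize}
The routine checks are the following: the group sizes are at least $2$ (needed for the cost formula); the last group is large enough to absorb all $2k-1$ far facilities; and the argument gives the bound in the limit, or exactly up to arbitrary $\varepsilon>0$, which suffices for a supremum.
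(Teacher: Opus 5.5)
Your argument is correct and is essentially the paper's own construction: all agents at $0$, $2k-1$ facilities at $1$, the far agents packed into $k-1$ pairs plus one far agent in a huge group, and a comparison matching that moves every far slot into the huge group. There is one arithmetic slip to fix. With $n=2k-2+M$ you have $n-(2k-1)=M-1$, not $M-(k-1)$, so for $k\geq 3$ your facility count and group sizes as written do not add up to $n$. With the correct count the large group has $M$ members and $\cost(O)\leq 1+\frac{\alpha(2k-2)}{M-1}$, which is exactly the paper's bound with $x=M-1$; the limit is unchanged.
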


\begin{proof}
Let $x \geq 2k-1$ be a parameter and consider the following instance on a line metric:
\begin{itemize}
    \item There are $n=x+2(k-1)+1=x+2k-1$ agents located at $0$.
    \item There are $x$ facilities located at $0$ and $2k-1$ facilities located at $1$.
\end{itemize}
Let $W$ be the matching computed by the algorithm, and denote by $S$ the set of $2k-1$ agents that are assigned to the facilities at $1$.
The groups might be as follows:
\begin{itemize}
    \item The first group contains the $x$ agents that are assigned to the facilities at $0$ and one agent from $S$.
    \item Each of the remaining $k-1$ groups consists of two agents from $S$. 
\end{itemize}
The maximum cost in the first group is $1$, and the maximum cost in each of the remaining $k-1$ groups is $1+\alpha$. 
Hence, the Sum-of-Max cost of $W$ is
\begin{align*}
    \cost(W) = (k-1)(1+\alpha)+1.
\end{align*}
We can create a matching $O$ such that a subset of $2k-1$ agents in the first group are assigned to the facilities at $1$, while all remaining agents are assigned to the facilities at $0$. Then, the optimal Sum-of-Max cost is realized by the first group and is equal to
\begin{align*}
    \cost(O) = 1+\frac{2k-2}{x}\cdot \alpha.
\end{align*}
By making $x$ arbitrarily large, we obtain a lower bound of $(1+\alpha)(k-1) + 1$ on the distortion.
\end{proof}

\begin{theorem}
For the Sum-of-Max objective, the distortion of any full-information clustering algorithm is at least $(1+\alpha)(k-1)+1$.
\end{theorem}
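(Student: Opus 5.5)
The plan is to adapt the construction from the matching lower bound. We need a metric in which every choice of cluster centers forces some agents to distance $1$ from their assigned center. The groups are then chosen after seeing the algorithm's choice. In the algorithm's clustering the far agents are packed in pairs, so that $k-1$ groups each pay $1+\alpha$. Some alternative clustering makes all far agents lie in a single huge group, where the $\alpha$-term vanishes as a parameter $x\to\infty$. The difficulty, compared with matching, is that the algorithm chooses \emph{which} agents are far. A line with three facilities fails here: with centers at $0$ and $2$, a heavy population at $1$ would be far, and a heavy population at an endpoint is far for one of the other choices. So I would use a symmetric uniform-type metric.

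\textbf{Instance.} Take four locations $H,A,B,C$ at pairwise distance $1$ (a uniform metric), with one facility at each, and $\lambda=3$. Place $x$ agents at $H$ (with $x$ large) and $2k-1$ agents at each of $A$, $B$, $C$. Every choice of $3$ centers omits exactly one location, and the agents there are at distance exactly $1$ from every selected center. All other agents are at distance $0$.

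\textbf{Case 1: the algorithm omits a light location,} say $A$ (the other two are symmetric). The groups might be as follows.
\begin{itemize}
\item The first group consists of all agents at $H$, $B$ and $C$, together with one agent at $A$.
\item The remaining $2k-2$ agents at $A$ are split into $k-1$ groups of size two.
\end{itemize}
By Lemma~\ref{lem:separate-group-cost}, each pair group has maximum cost $1+\alpha$. In the first group the maximum cost is $1$, since every other member is at distance $0$. Hence $\cost(W)=(k-1)(1+\alpha)+1$.

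Now omit $B$ instead. Only the $2k-1$ agents of $B$ are far, and they all belong to the first group. That group's maximum cost is $1+\alpha\frac{2k-2}{n_1-1}$, where $n_1\ge x$, and every other group costs $0$. Letting $x\to\infty$ gives the ratio $(1+\alpha)(k-1)+1$.

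\textbf{Case 2: the algorithm omits $H$.} Now all $x$ heavy agents are far.
\begin{itemize}
\item Put all light agents, together with all but $2(k-1)$ heavy agents, into the first group.
\item Split the remaining $2(k-1)$ heavy agents into $k-1$ pairs.
\end{itemize}
Under $W$, each pair costs $1+\alpha$. In the first group the maximum cost is at least $1$, since it contains a far heavy agent. Hence $\cost(W)\ge (k-1)(1+\alpha)+1$.

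Now omit $A$ instead. Only the agents at $A$ are far, and all of them lie in the first group. As in Case 1, the cost tends to $1$ as $x\to\infty$, and the pairs cost $0$.

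\textbf{Checks.} The main thing to verify is that the group sizes are at least $2$ and that the $\alpha$-term in the optimal clustering really vanishes. Both hold once $x\ge 2k-1$ and $x\to\infty$.
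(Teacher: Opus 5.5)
Your proof is correct and uses the same strategy as the paper: pair the far agents into $k-1$ groups of cost $1+\alpha$, put everyone else into one huge group, and exhibit an alternative center set under which only a light population is far, all of it inside the huge group; the only difference is the gadget, a uniform four-point metric with $\lambda=3$ instead of the paper's line $0,1,2$ with $\lambda=2$, $x+2(k-1)$ agents at $1$ and $2k+1$ agents at each endpoint. Your claim that such a line fails is mistaken, though: when the algorithm picks $(0,2)$ and the heavy middle population is far, the paper treats it exactly like your Case~2 (the huge group's maximum cost is then close to $1+\alpha$, which only helps), and the alternative $(0,1)$ leaves only the light endpoint $2$ far; also, compute the pair groups' cost $1+\alpha$ directly, since Lemma~\ref{lem:separate-group-cost} only gives an upper bound.
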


\begin{proof}
Let $x$ be a parameter and consider the following instance on a line metric:
\begin{itemize}
    \item There are three facilities at $0$, $1$ and $2$.
    \item There are $2k+1$ agents at $0$, $x+2(k-1)$ agents at $1$, and $2k+1$ agents at $2$.
\end{itemize}
We consider the possible pairs of cluster centers chosen by the algorithm.

\medskip
\noindent
{\bf Case 1:} The cluster centers are $(0,2)$.
The $k$ groups might be as follows:
\begin{itemize}
\item The first group consists of the $2k+1$ agents at $0$, the $2k+1$ agents at $2$, and $x$ agents at $1$.
\item Each of the remaining $k-1$ groups consists of $2$ agents at $1$.
\end{itemize}
The maximum cost in the first group is $1+\frac{\alpha}{x+4k+1}(x-1)$ (realized by any agent at $1$), and the maximum cost in each of the last $k-1$ groups is $1+\alpha$. Hence, the Sum-of-Max cost of the computed clustering $W$ is 
\begin{align*}
    \cost(W) = (k-1)(1+\alpha) + 1 + \frac{x-1}{x+4k+1}\cdot \alpha.
\end{align*}
By choosing $(0,1)$, the maximum cost in the first group is $1+\frac{2k}{x+4k+1}\alpha$ (realized by any agent at $2$), and the maximum cost in each of the last $k-1$ groups is $0$. Hence, the Sum-of-Max cost of the optimal clustering $O$ is 
\begin{align*}
    \cost(O) = 1+\frac{2k}{x+4k+1}\cdot \alpha.
\end{align*}
By making $x$ arbitrarily large, the distortion is at least $(1+\alpha)k$. 

\medskip
\noindent
{\bf Case 2:} The cluster centers are $(0,1)$ (the case of $(1,2)$ is symmetric).
The $k$ groups might be as follows:
\begin{itemize}
    \item The first group consists of the $2k+1$ agents at $0$, the $x+2(k-1)$ agents at $1$, and three agents at $2$.
    \item Each of the remaining $k-1$ groups consists of two agents at $2$.
\end{itemize}
This is a partition of all agents, since the first group contains three of the $2k+1$ agents at $2$, while the remaining groups contain the other $2(k-1)$ such agents. Under the centers $(0,1)$, the maximum cost in the first group is
\begin{align*}
    1+\frac{2\alpha}{x+4k+1},
\end{align*}
realized by any of its three agents at $2$, and the maximum cost in each of the last $k-1$ groups is $1+\alpha$. Hence, the Sum-of-Max cost of the computed clustering $W$ is
\begin{align*}
    \cost(W) = (k-1)(1+\alpha)+1+\frac{2\alpha}{x+4k+1}.
\end{align*}
By choosing $(1,2)$, the maximum cost in the first group is
\begin{align*}
    1+\frac{2k\alpha}{x+4k+1},
\end{align*}
realized by any agent at $0$, and the maximum cost in each of the last $k-1$ groups is $0$. Thus, for this clustering $O$,
\begin{align*}
    \cost(O) = 1+\frac{2k\alpha}{x+4k+1}.
\end{align*}
By making $x$ arbitrarily large, the distortion is at least $(1+\alpha)(k-1)+1$.
\end{proof}

\subsection{Ordinal Information Algorithms} \label{sec:sum-of-max:ordinal}
For ordinal-information algorithms, as in Section~\ref{sec:max-of-sum:ordinal}, we form a proxy instance in which each agent $i$ is located at its top-ranked facility $\top(i)$ and apply a $\rho$-approximate algorithm for the MAX problem. Let $\widehat{W}=(\widehat{w}_i)_{i\in N}$ denote the assignment computed in the proxy instance. For clustering, the proxy algorithm selects a set $C_W$ of cluster centers, $\widehat{w}_i$ is a closest center in $C_W$ to $\top(i)$, and the actual ordinal algorithm assigns $i$ to its highest-ranked center $w_i$ in $C_W$. For all other facility assignment problems, we set $w_i=\widehat{w}_i$. Hence, as in Inequality~\eqref{eq:minitop-sum:transfer},
\begin{align} \label{eq:minitop-max:transfer}
    d(i,w_i) \leq d(i,\top(i)) + d(\top(i),\widehat{w}_i).
\end{align}
Our algorithm, named {\sc $\rho$-MiniTopMAX}, outputs $W=(w_i)_{i\in N}$, while the proxy assignment satisfies, for every feasible comparison assignment $X=(x_i)_{i\in N}$,
\begin{align} \label{eq:minitop-max:proxy}
   \max_{i \in N} d(\top(i),\widehat{w}_i) \leq \rho \cdot \max_{i \in N} d(\top(i),x_i).
\end{align}
For clustering, this inequality again follows by comparing the selected center set with the center set inducing $X$ and assigning each proxy agent to its closest selected center.

\begin{theorem} \label{thm:sum-of-max:ordinal:upper}
Given any $\rho$-approximate algorithm for the MAX problem, the distortion of {\sc $\rho$-MiniTopMAX} is at most $(1+\alpha)(2\rho k+1)$ in terms of the Sum-of-Max objective. 
\end{theorem}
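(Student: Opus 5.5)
We follow the proof of Theorem~\ref{thm:sum-of-max:full:upper}, with the triangle-inequality steps of Theorem~\ref{thm:max-of-sum:ordinal:upper} added. Let $W$ be the output of {\sc $\rho$-MiniTopMAX} and $O$ an optimal assignment for Sum-of-Max. Write $D^* = \max_{i\in N} d(i,o_i)$.

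\textbf{Lower bound on the optimum.} By Lemma~\ref{lem:separate-group-cost} applied to $O$, the representative in $S_O$ of each group has maximum individual cost in that group. Its cost is at least its own distance, so Inequality~\eqref{eq:sum-of-max:full:optimal} gives $\cost(O) \geq D^*$. We will also need a sum-type bound. Take any group $g$ with $\cost(O)\geq \max_{i\in g}\cost_i(O)\geq \max_{i \in g} d(i,o_i)$ and sum over groups. This gives
\[
\cost(O) \geq \sum_{g\in G}\max_{i\in g} d(i,o_i) = \sum_{i\in S_O} d(i,o_i).
\]

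\textbf{Upper bound on the algorithm.} Apply Lemma~\ref{lem:separate-group-cost} to $W$:
\[
\cost(W)\leq (1+\alpha)\sum_{i\in S_W} d(i,w_i).
\]
For each $i\in S_W$, the transfer bound~\eqref{eq:minitop-max:transfer} gives $d(i,w_i)\leq d(i,\top(i)) + d(\top(i),\widehat{w}_i)$. We handle the two terms separately.
\begin{itemize}
\item \emph{First term.} Use $d(i,\top(i))\leq d(i,o_i)$ and, letting $g_i$ be the group of $i$, $d(i,o_i)\leq \max_{j\in g_i} d(j,o_j)$. Since $S_W$ has exactly one agent per group, summing gives $\sum_{i\in S_W} d(i,\top(i)) \leq \sum_{g}\max_{j\in g} d(j,o_j)\leq \cost(O)$.
\item \emph{Second term.} Bound each summand by the maximum over $N$ and apply the proxy guarantee~\eqref{eq:minitop-max:proxy} with $X=O$:
\[
\sum_{i\in S_W} d(\top(i),\widehat{w}_i)\leq k\rho\max_{i\in N} d(\top(i),o_i).
\]
By the triangle inequality and the choice of $\top(i)$, $d(\top(i),o_i)\leq d(\top(i),i)+d(i,o_i)\leq 2d(i,o_i)$. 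Hence this term is at most $2\rho k D^*\leq 2\rho k\cost(O)$.
\end{itemize}
Combining, $\cost(W)\leq (1+\alpha)(1+2\rho k)\cost(O)$.

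\textbf{Main obstacle.} The delicate step is the first term. Bounding it crudely by $kD^*$ would give $(1+\alpha)(2\rho+1)k$ rather than $(1+\alpha)(2\rho k+1)$. To avoid this, we must use that $S_W$ contains one agent per group. Each $d(i,o_i)$ is then charged to the group maximum under $O$, and the sum of group maxima is at most $\cost(O)$ because maximum individual cost dominates maximum distance within each group. Everything else is routine.
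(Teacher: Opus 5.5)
Your proof is correct and follows the paper's argument essentially step for step. Like the paper, you apply Lemma~\ref{lem:separate-group-cost} to $W$ and use the transfer bound, charge $\sum_{i\in S_W} d(i,\top(i))$ to the group maxima $\sum_{i\in S_O} d(i,o_i)\leq \cost(O)$ using that $S_W$ has one agent per group, and bound the proxy term by $2\rho k\max_{i\in N} d(i,o_i)$.

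One wording slip: summing $\cost(O)\geq\max_{i\in g}\cost_i(O)$ over groups would only put $k\cdot\cost(O)$ on the left. Your sum-type bound is instead justified by $\cost(O)=\sum_{g\in G}\max_{i\in g}\cost_i(O)$ together with $\cost_i(O)\geq d(i,o_i)$.
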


\begin{proof}
Let $W$ be the assignment computed by the algorithm, and let $O$ be an optimal assignment for the Sum-of-Max objective. 
In addition, let $S_W$ and $S_O$ be the sets defined in Lemma~\ref{lem:separate-group-cost} for $W$ and $O$, respectively. Their selected agents maximize both assignment distance and individual cost within their groups. For the optimal cost, we again use Inequality~\eqref{eq:sum-of-max:full:optimal}, which states that
\begin{align*}
    \cost(O) = \sum_{i \in S_O} \cost_i(O) \geq \max_{i \in N} d(i,o_i).
\end{align*}
Using Lemma~\ref{lem:separate-group-cost}, the transfer bound in Inequality~\eqref{eq:minitop-max:transfer}, the fact that $d(i,\top(i))\leq d(i,o_i)$, the $\rho$-approximation guarantee of the proxy assignment in Inequality~\eqref{eq:minitop-max:proxy}, and the above lower bound on the optimal cost (Inequality~\eqref{eq:sum-of-max:full:optimal}), we have
\begin{align*}
    \cost(W) 
     &= \sum_{i \in S_W} \cost_i(W) \\
     &\leq (1+\alpha) \sum_{i \in S_W} d(i,w_i) \\
     &\leq (1+\alpha) \sum_{i \in S_W} d(i,\top(i)) + (1+\alpha) \sum_{i \in S_W} d(\top(i),\widehat{w}_i) \\
     &\leq (1+\alpha) \sum_{i \in S_W} d(i,o_i) + (1+\alpha)k \cdot \max_{i \in S_W} d(\top(i),\widehat{w}_i) \\
     &\leq (1+\alpha) \sum_{i \in S_O} d(i,o_i) + (1+\alpha)k \cdot \max_{i \in N} d(\top(i),\widehat{w}_i)  \\
     &\leq (1+\alpha) \cdot \cost(O) +  (1+\alpha)\rho k \cdot \max_{i \in N} d(\top(i),o_i)  \\
     &\leq (1+\alpha) \cost(O) +  2(1+\alpha)\rho k \cdot \max_{i \in N} d(i,o_i)  \\
     &\leq (1+\alpha) (2\rho k+1) \cdot \cost(O).
\end{align*}
The proof is complete.
\end{proof}

Finally, we show informational lower bounds for one-sided matching and clustering that are asymptotically tight in $k$. 

\begin{theorem}
For the Sum-of-Max objective, the distortion of any ordinal-information matching algorithm is at least $2(k-1)(1+\alpha)+3$.
\end{theorem}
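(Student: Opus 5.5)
The plan is to adapt the full-information matching construction for Sum-of-Max, which uses a large first group and $k-1$ pairs, and add the ``half-distance'' trick from the ordinal Max-of-Sum lower bound. The aim is an optimum of about $1/2$, while the algorithm pays about $(k-1)(1+\alpha)+3/2$.

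\textbf{Ordinal profile.} Fix a large parameter $x$. There are $x$ facilities at $0$ and $2k-1$ facilities at $1$. There are $n=x+2k-1$ agents, all with the same ordinal preference $0 \succ 1$, with ties inside a location broken arbitrarily. Let $W$ be the matching the algorithm returns, and let $S$ be the $2k-1$ agents it sends to facilities at $1$. Since the profile is symmetric, $S$ is determined by the algorithm, but every choice of $S$ looks the same to the adversary.

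\textbf{Adversarial metric and groups.} The adversary places agents on the line consistently with $0\succ 1$; it may use positions $\leq 1/2$, since ties are allowed.
\begin{itemize}
\item $2(k-1)$ agents of $S$ are placed at $0$ and paired into the groups $2,\dots,k$. In each such pair both agents are at distance $1$, so the group's maximum cost is $1+\alpha$.
\item The last agent of $S$ is placed at $-1/2$, which is at distance $3/2$ from its facility at $1$. It goes into the first group.
\item The first group also contains all $x$ agents matched to $0$. Of these, $2k-1$ are placed at $1/2$ (they are indifferent between $0$ and $1$) and the rest at $0$.
\end{itemize}
In $W$, the agent at $-1/2$ has cost at least $3/2$. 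So $\cost(W)\geq (k-1)(1+\alpha)+3/2$.

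\textbf{Comparison matching.} Match the $2k-1$ agents at $1/2$ to the facilities at $1$, and everyone else, including the agent at $-1/2$, to facilities at $0$; there are exactly $x$ of the latter. Every agent is then at distance at most $1/2$. The pairs in groups $2,\dots,k$ sit at $0$ matched to $0$, so they cost $0$. In the first group, which has $x+1$ members, each agent's cost is at most
\[
\tfrac12+\tfrac{\alpha}{x}\cdot 2k\cdot\tfrac12,
\]
which tends to $1/2$ as $x\to\infty$. Therefore
\[
\frac{\cost(W)}{\cost(O)}\;\to\;\frac{(k-1)(1+\alpha)+3/2}{1/2} \;=\; 2(k-1)(1+\alpha)+3 .
\]

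\textbf{Main obstacle.} The delicate step is checking consistency of the metric. The agent at $-1/2$ must genuinely rank $0\succ 1$, which it does since $d=1/2<3/2$. The agents at $1/2$ are tied, which the definition $x\succ_i y \Rightarrow d(i,x)\le d(i,y)$ allows; if strictness were needed, one would use $1/2-\epsilon$ and let $\epsilon\to 0$. Beyond that, one must check the exact first-group maximum in $W$, which is attained by the agent at $-1/2$ and only increases the numerator. One must also check that the comparison matching is feasible, i.e.\ that the counts of agents sent to $0$ and to $1$ are exactly $x$ and $2k-1$.
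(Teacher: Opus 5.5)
Your proposal is correct and matches the paper's proof: the same instance ($x$ facilities at $0$, $2k-1$ at $1$, all agents ranking $0\succ 1$), the same adversarial placement ($k-1$ pairs from $S$ at $0$, the remaining agent of $S$ at $-1/2$, and $2k-1$ of the first group's agents at $1/2$), the same comparison matching, and the same limit $x\to\infty$. Your bound $\frac12+\frac{\alpha k}{x}$ on the comparison cost is slightly looser than the paper's exact value $\frac12+\frac{\alpha}{x}\cdot\frac{2k-1}{2}$, which does not affect the limit.
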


\begin{proof}
Let $x \geq 2k-1$ be a parameter and consider the following instance on a line metric:
\begin{itemize}
    \item There are $x$ facilities located at $0$ and $2k-1$ facilities located at $1$.
    \item There are $n=x+2(k-1)+1=x+2k-1$ agents with the same ordinal preferences such that they prefer the facilities at $0$ over the facilities at $1$.
\end{itemize}
Let $W$ be the matching computed by the algorithm, and denote by $S$ the set of $2k-1$ agents that are assigned to the facilities at $1$. 
Consider the following partition of the agents into groups and a consistent metric space:
\begin{itemize}
    \item The first group is of size $x+1$. It includes $x-(2k-1)$ agents that are assigned to facilities at $0$ and are located at $0$, $2k-1$ agents that are assigned to facilities at $0$ and are located at $1/2$, and one agent that is assigned to a facility at $1$ and is located at $-1/2$. 

    \item Each of the remaining $k-1$ groups is of size $2$ and consists of two agents that are assigned to facilities at $1$ and are located at $0$. 
\end{itemize}
The maximum cost in the first group is realized by the agent at $-1/2$ and is equal to $\frac{3}{2} + \frac{\alpha}{x}\cdot\frac{2k-1}{2}$. The maximum cost in each of the remaining $k-1$ groups is $1+\alpha$. Hence, the Sum-of-Max cost of $W$ is 
\begin{align*}
    \cost(W) = (k-1)(1+\alpha) + \frac{3}{2} + \frac{\alpha}{x}\cdot\frac{2k-1}{2}.
\end{align*}
We can create a matching $O$ that assigns the $2k-1$ agents in the first group that are located at $1/2$ to the facilities at $1$, and all other agents to the facilities at $0$. Then, the maximum cost in the first group is $\frac{1}{2} + \frac{\alpha}{x}\cdot\frac{2k-1}{2}$ (realized by any agent located at $1/2$ or the agent located at $-1/2$), while the maximum cost in any other group is $0$. So, the optimal Sum-of-Max cost is  
\begin{align*}
    \cost(O) = \frac{1}{2} + \frac{\alpha}{x}\cdot\frac{2k-1}{2}.
\end{align*}
By making $x$ arbitrarily large, we obtain a lower bound of $2(k-1)(1+\alpha)+3$.
\end{proof}

\begin{theorem}
For the Sum-of-Max objective, the distortion of any ordinal-information clustering algorithm is at least 
$2(k-1)(1+\alpha)+2$.
\end{theorem}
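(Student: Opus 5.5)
I would adapt the three-facility construction from the ordinal Max-of-Sum clustering lower bound, using the ``large first group'' trick from the Sum-of-Max lower bounds. The target is a case analysis over the center pair chosen by the algorithm. In every case the $k-1$ small groups have two agents each, both at distance $1$ from their assigned center under $W$, so each contributes $1+\alpha$. The large first group should have maximum cost $1+o(1)$ under $W$ but $\frac12+o(1)$ under some alternative clustering $O$, in which every small group has cost $0$. Letting $x\to\infty$ then gives the ratio
\[
\frac{(k-1)(1+\alpha)+1}{1/2}=2(k-1)(1+\alpha)+2 .
\]

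The instance has facilities at $0,1,2$ and the following agents:
\begin{itemize}
\item $2k+1$ agents with ranking $0\succ1\succ2$;
\item $2k+1$ agents with ranking $2\succ1\succ0$;
\item $x+2(k-1)$ agents with top facility $1$, split evenly between the rankings $1\succ0\succ2$ and $1\succ2\succ0$ (only their top matters).
\end{itemize}
The line metric is only used for facility distances; each agent's position will be chosen per case so that it is consistent with its ranking.

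\textbf{Case (0,2).} Each of the $k-1$ small groups consists of two top-$1$ agents located at $1$; they are at distance $1$ from either center, so each small group costs $1+\alpha$. The first group contains the remaining top-$1$ agents (also at $1$), the top-$0$ agents at $0$, and the top-$2$ agents at $3/2$. Under $W$ its maximum distance is $1$, attained by the agents at $1$. Since $\Theta(x)$ members of the first group are also at distance $1$, its maximum individual cost is $1+\alpha\cdot\Theta(1)$, not $1+o(1)$. So at this point I would have the first group's top-$1$ agents sit at $1$ only for one representative, and place the other $x$ top-$1$ agents at $1/2$ (rank $1\succ0\succ2$, tie allowed) or $3/2$. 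Then the first group's max under $W$ is $1+o(1)$, from that single agent at $1$. In the alternative clustering $O$ with centers $(0,1)$, every agent of the first group is within $1/2$, and the small groups cost $0$.

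\textbf{Case (0,1)} (and symmetrically $(1,2)$). Each small group consists of two top-$2$ agents at $2$, at distance $1$ from the chosen centers. The first group contains the three remaining top-$2$ agents at $2$ (the one that realizes distance $1$ under $W$ and distance $0$ under $O$), the top-$0$ agents at $1/2$, and the top-$1$ agents at $1$ or $1/2$ or $3/2$. Under $O$ with centers $(1,2)$, every member of the first group is within $1/2$ and the small groups cost $0$.

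In both cases I would verify three things. First, each ranking is consistent with the chosen positions, where a point at $1/2$ is tied between $0$ and $1$ and a point at $3/2$ is tied between $1$ and $2$. Second, the groups partition the agents, which is where the counts $2k+1$ and $x+2(k-1)$ are used. Third, the averaging terms in the first group are $O(k/x)$ both under $W$ and under $O$.

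The main obstacle is exactly this bookkeeping in the first group: only $O(1)$ of its members may be at distance $1$ under $W$, otherwise the $\alpha$-term pushes its cost above $1+o(1)$, which is harmless for the bound but breaks the intended accounting. At the same time, every member must be within $1/2$ of the centers of $O$, and all positions must be consistent with the fixed rankings. I would first try the placements above and adjust the counts of agents at $1/2$ versus $3/2$ if some consistency check fails.
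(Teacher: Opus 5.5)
\textbf{There is a gap: your modified Case $(0,2)$ placement breaks the bound.} Your instance is the paper's, and so is your case skeleton. The failure comes from moving all but one of the first group's top-$1$ agents to $1/2$ or $3/2$. Those $\Theta(x)$ agents are at distance exactly $1/2$ from every facility, hence from every center pair. So the averaging term in the first group is about $\alpha/2$ under both $W$ and $O$, not $O(k/x)$ as your third verification item claims.

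Concretely, under $(0,2)$ the single agent at $1$ has cost tending to $1+\frac{\alpha}{2}$. Under $(0,1)$, and in fact under any pair of centers, a first-group agent at distance $1/2$ has cost tending to $\frac{1+\alpha}{2}$. The ratio you can certify is therefore at most
\[
\frac{(k-1)(1+\alpha)+1+\alpha/2}{(1+\alpha)/2}=2(k-1)+\frac{2+\alpha}{1+\alpha},
\]
which falls short of $2(k-1)(1+\alpha)+2$ by $2(k-1)\alpha+\frac{\alpha}{1+\alpha}>0$ for every $\alpha>0$. Your Case $(0,1)$ has the same problem if you put the bulk of the top-$1$ agents at $1/2$ or $3/2$ rather than at $1$.

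\textbf{The constraint you named as the main obstacle is the wrong one.} It does not matter how many first-group members are at distance $1$ under $W$, since a larger $\cost(W)$ only helps. What matters is that all but $O(k)$ first-group members are at distance $0$ under $O$, so that $\cost(O)\to\frac12$.

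Your target of $1+o(1)$ under $(0,2)$ is also unattainable here for $\alpha>0$. Any agent $i$ whose top facility is $1$ satisfies $d(i,0),d(i,2)\geq \frac12$ by the triangle inequality. The first group contains at least $x$ such agents, so under $(0,2)$ its averaging term is $\Omega(\alpha)$ wherever you place them.

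\textbf{Your first placement, which you abandoned, is correct and is exactly the paper's.} Keep the $x$ top-$1$ agents of the first group at $1$. Under $(0,2)$ they cost $1+\frac{x+k-1/2}{x+4k+1}\alpha\to 1+\alpha$. Under $(0,1)$ they cost nothing, and the maximum is $\frac12+\frac{k\alpha}{x+4k+1}\to\frac12$, attained at the agents at $3/2$. This gives ratio $2k(1+\alpha)\geq 2(k-1)(1+\alpha)+2$. In Case $(0,1)$, put the top-$1$ agents at $1$ as well. They are then at distance $0$ under both $(0,1)$ and $(1,2)$, and your accounting gives exactly $2(k-1)(1+\alpha)+2$.
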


\begin{proof}
Let $x$ be an even parameter and consider the following instance on a line metric:
\begin{itemize}
    \item There are three facilities at $0$, $1$ and $2$.
    \item There are $2k+1$ agents with ranking $0 \succ 1 \succ 2$.
    \item There are $2k+1$ agents with ranking $2 \succ 1 \succ 0$.
    \item There are $x+2(k-1)$ agents with $1$ as their most-preferred location. These agents are partitioned into sets $S_0$ and $S_2$ of equal size, such that the agents in $S_0$ prefer $0$ over $2$ (i.e., their ranking is $1 \succ 0 \succ 2$), and the agents in $S_2$ prefer $2$ over $0$ (i.e., their ranking is $1 \succ 2 \succ 0$).  
\end{itemize}
We switch between the possible locations of the cluster centers.

\medskip
\noindent
{\bf Case 1:} The cluster centers are $(0,2)$.
The $k$ groups and the consistent metric space might be as follows:
\begin{itemize}
\item The first group consists of the $2k+1$ agents with ranking $0 \succ 1 \succ 2$ that are positioned at $0$, 
the $2k+1$ agents with ranking $2 \succ 1 \succ 0$ that are positioned at $3/2$, and $x$ agents with $1$ as their most-preferred location that are positioned at $1$.

\item Each of the remaining $k-1$ groups consists of $2$ agents with $1$ as their most-preferred location that are positioned at $1$.
\end{itemize}
The maximum cost in the first group is $1+\frac{\alpha}{x+4k+1}(x-1 + (2k+1)\frac12)$ (realized by any agent at $1$), and the maximum cost in each of the last $k-1$ groups is $1+\alpha$. Hence, the Sum-of-Max cost of the computed clustering $W$ is 
\begin{align*}
    \cost(W) = (k-1)(1+\alpha) + 1 + \frac{x+k-1/2}{x+4k+1}\cdot \alpha.
\end{align*}
By choosing $(0,1)$, the maximum cost in the first group is $\frac12+\frac{\alpha}{x+4k+1}\cdot 2k\frac12$ (realized by any agent at $3/2$), and the maximum cost in each of the last $k-1$ groups is $0$. Hence, the Sum-of-Max cost of this optimal clustering $O$ is 
\begin{align*}
    \cost(O) = \frac12 + \frac{k}{x+4k+1}\alpha.
\end{align*}
By making $x$ arbitrarily large, the distortion is at least $2(k-1)(1+\alpha)+2$. 

\medskip
\noindent
{\bf Case 2:} The cluster centers are $(0,1)$ (the case $(1,2)$ is symmetric).
The $k$ groups might be as follows:
\begin{itemize}
    \item The first group consists of the $2k+1$ agents with ranking $0 \succ 1 \succ 2$ that are positioned at $1/2$, 
    the $x+2(k-1)$ agents with $1$ as their most-preferred location that are positioned at $1$, and $3$ agents with ranking $2 \succ 1 \succ 0$ that are positioned at $2$.
    \item Each of the remaining $k-1$ groups consists of two agents at $2$.
\end{itemize}
The maximum cost in the first group is $1 + \frac{\alpha}{x+4k+1}\left(2+(2k+1)\frac12\right)$ (realized by any agent at $2$), and the maximum cost in each of the last $k-1$ groups is $1+\alpha$. Hence, the Sum-of-Max cost of the computed clustering $W$ is
\begin{align*}
    \cost(W) = (k-1)(1+\alpha) + 1 + \frac{k+5/2}{x+4k+1} \alpha.
\end{align*}
By choosing $(1,2)$, the maximum cost in the first group is $1/2+\frac{\alpha}{x+4k+1}\cdot 2k\frac12$ (realized by any agent at $1/2$), and the maximum cost in each of the other $k-1$ groups is $0$. Hence, the optimal Sum-of-Max cost is
\begin{align*}
    \cost(O) = \frac12 + \frac{k}{x+4k+1} \alpha.
\end{align*}
By making $x$ arbitrarily large, the distortion is at least $2(k-1)(1+\alpha)+2$. 
\end{proof}

\section{Conclusion}\label{sec:conclusion}
In this paper, we investigated the group-fair distortion of metric facility assignment problems, paying particular attention to one-sided matching and clustering. We derived bounds on the distortion of algorithms that are given the exact distances between facilities and different types of information about the distances between agents and facilities. For one-sided matching and clustering, these bounds are exact for Max-of-Sum and asymptotically tight in $k$ for Sum-of-Max. That is, we considered full-information algorithms, which have complete access to the underlying metric space and whose inefficiency is due to the unknown group structure, as well as ordinal-information algorithms, which have access only to the rankings of agents over facilities and incur additional inefficiency due to this limited information.

There are many interesting directions for future work, such as considering alternative group-fair objectives beyond the standard Max-of-Sum and Sum-of-Max objective functions; for example, there are natural generalizations that combine different $p$-norms or $k$-centrum costs within and across groups. In addition, it would be interesting to study the distortion of algorithms with access to different types of information about the metric space, such as algorithms that have limited (e.g., ordinal) information about the distances between facilities, have approval information about the preferences of the agents over the facilities, or can make queries to obtain partial access to the metric space. Another promising direction is the analysis of randomized and learning-augmented algorithms~\citep{BergerFGT24,Filos-Ratsikas025}, which, for example, may be given predictions about the unknown preferences of the agents or about the group structure.

\bibliographystyle{plainnat}
\bibliography{references}

\end{document}